\title{Extending the primal-dual 2-approximation algorithm beyond uncrossable set families}
\titlerunning{Extending the primal-dual 2-approximation algorithm beyond uncrossable set families}
\author{Zeev Nutov}{The Open University of Israel}{nutov@openu.ac.il}
{https://orcid.org/0000-0002-6629-3243}{}
\authorrunning{Zeev Nutov}
\begin{document}

\maketitle
\newcommand {\ignore} [1] {}

\newcommand{\sem}    {\setminus}
\newcommand{\subs}   {\subseteq}
\newcommand{\empt}  {\emptyset}

\newcommand{\f}   {\frac}
\def\b   {\bar}

\def\A {\mathbb{A}}

\def\al   {\alpha}
\def\be {\beta}
\def\ga {\gamma}
\def\de   {\delta}
\def\eps {\epsilon}
\def\la {\lambda}

\def\AA  {{\cal A}}
\def\BB  {{\cal B}}
\def\CC  {{\cal C}}
\def\FF  {{\cal F}}
\def\LL  {{\cal L}}
\def\PP  {{\cal P}}

\def\sfec {{\sc Set Family Edge Cover}}

\def\cfo   {{\sc Constrained Forest}}
\def\sfo   {{\sc Steiner Forest}}
\def\st   {{\sc Steiner Tree}}
\def\rst   {{\sc Rooted Steiner Tree}}
\def\tj      {{\sc $T$-Join}}
\def\gpp {{\sc Generalized Point-To-Point Connection}}
\def\sna  {{\sc Steiner Network Augmentation}}

\def\tkcfo  {{\sc $(T,k)$-Constrained Forest}}        
\def\csf   {{\sc Constrained Steiner Forest}}

\keywords{primal dual algorithm, uncrossable set family, approximation algorithm}

\begin{abstract}
A set family $\FF$ is {\bf uncrossable} if $A \cap B,A \cup B \in \FF$ or $A \sem B,B \sem A \in \FF$ for any $A,B \in \FF$. 
A classic result of Williamson, Goemans, Mihail, and Vazirani [STOC 1993:708-717] 
states that the problem of covering an uncrossable set family by a min-cost edge set 
admits approximation ratio~$2$, by a primal-dual algorithm.
They asked whether this result extends to a larger class of set families and combinatorial optimization problems.
We define a new class of {\bf semi-uncrossable set families}, when for any $A,B \in \FF$ we have that 
$A \cap B \in \FF$ and one of $A \cup B,A \sem B ,B \sem A$ is in $\FF$, or $A \sem B,B \sem A \in \FF$.
We will show that the Williamson et al. algorithm extends to this new class of families
and  identify several ``non-uncrossable'' algorithmic problems that belong to this class.  
In particular, we will show that the union 
of an uncrossable family and a monotone family, or 
of an uncrossable family that has the disjointness property and a proper family,
is a semi-uncrossable family, that in general is not uncrossable.
For example, our result implies approximation ratio $2$ for the problem of finding a min-cost subgraph $H$ 
such that $H$ contains a Steiner forest and every connected component of $H$ 
contains at least $k$ nodes from a given set $T$ of terminals. 
\end{abstract}

\section{Introduction} \label{s:intro}

Let $G=(V,E)$ be  graph. 
For $J \subs E$ and $S \subs V$ let $\de_J(S)$ denote the set 
of edges in $J$ with one end in $S$ and the other in $V \sem S$, and let $d_J(S)=|\de_J(S)|$ be their number.
An edge set $J$ {\bf covers} $S$ if $d_J(S) \geq 1$.
Consider the following problem:

\begin{center}
\fbox{\begin{minipage}{0.98\textwidth} \noindent
\underline{\sfec} \\ 
{\em Input:} \ \ A graph $G=(V,E)$ with edge costs $\{c_e:e \in E\}$ and a set family $\FF$ on $V$. \\ 
{\em Output:} A min-cost edge set $J \subs E$ such that $d_J(S) \geq 1$ for all $S \in \FF$.
\end{minipage}} \end{center}

In this problem the set family $\FF$ may not given explicitly, but we will require that some queries related to $\FF$ can be 
answered in time polynomial in $n=|V|$. 
Specifically, following \cite{WGMV}, we will require that for any edge set $I$, 
the inclusion minimal members of the {\bf residual family  $\FF^I=\{S \in \FF:d_I(S)=0\}$} of $\FF$ 
can be computed in time polynomial in $n=|V|$.
We will also assume that $V,\empt \notin \FF$, as otherwise the problem has no feasible solution.

Various types of set families were considered in the literature, classified according 
to the conditions they satisfy. In this paper we will focus on the following four conditions:
\begin{tabbing}
{\bf Monotonicity}     \ \ \ \ \                       $\empt \neq S' \subs S$ \= $\in \FF$ \= $\Longrightarrow$ \= $S' \in \FF$ \\
{\bf Symmetry}         \hspace{2.35cm}                 $S$ \> $ \in \FF$ \> $\Longrightarrow$ \> $\b{S} \in \FF$, where $\b{S}=V \sem S$. \\
{\bf Disjointness} \hspace{0.685cm}  $\empt \neq S' \subs S$ \> $\in \FF$  \> $\Longrightarrow$ \> $S' \in \FF$ or $S \sem S' \in \FF$ \\
{\bf Uncrossability}  \hspace{1.3cm}             $A,B$ \>$ \in \FF$  \> $\Longrightarrow$ \> $A \cap B,A \cup B \in \FF$ or $A \sem B,B \sem A \in \FF$ 
\end{tabbing}

We note that in the paper of Goemans and Williamson \cite{GW}, the disjointness property was defined as follows:
if $A,B$ are disjoint and $A \cup B \in \FF$ then $A \in \FF$ or $B \in \FF$. 
For $A=S'$ and $B=S \sem S'$ this is equivalent to the definition given here.
Let us say that a set-family $\FF$ is:
\begin{itemize}
\item
{\bf Symmetric} if the Symmetry Condition holds for all $S \in \FF$.
\item
{\bf Monotone} if the Monotonicity Condition holds for all $\empt \neq S' \subs S \in \FF$.
\item
{\bf Disjointness compliable} if the Disjointness Condition holds for all $\empt \neq S' \subs S \in \FF$.
\item
{\bf Uncrossable} if the Uncrossability Condition holds for all $A,B \in \FF$.
\end{itemize}

A set family is {\bf proper} if it is both symmetric and disjointness compliable.
It is not hard to see the following:
\begin{itemize}
\item
If $\FF$ is monotone then $\FF$ is disjointness compliable and uncrossable. 
\item
If $\FF$ is proper then $\FF$ is uncrossabile.
\end{itemize}

Agrawal, Klein and Ravi \cite{AKR} designed and analyzed a primal-dual approximation algorithm
for the {\sc Steiner Forest} problem, and showed that it achieves approximation ratio $2$. 
A classic result of Goemans and Williamson \cite{GW} from the early 90's shows by an elegant proof 
that the same algorithm applies for {\sfec} with an arbitrary proper family $\FF$.
In fact, one of the main achievements of the Goemans and Williamson paper was defining 
a {\em generic class} of set families that models a rich collection of combinatorial optimization problems,
for which the primal dual algorithm achieves approximation ratio~$2$. 
Slightly later, Williamson, Goemans, Mihail, and Vazirani \cite{WGMV} 
further extended this result to the more general class of uncrossable families 
by adding to the algorithm a novel reverse-delete phase.
They posed an open question of extending this algorithm to 
a larger class of set families and combinatorial optimization problems.
However, for 30 years, the class of uncrossable set families remained the most general generic class 
of set families for which the primal dual algorithm achieves approximation ratio $2$.
We present a strict generalization of this result, illustrated by several examples. 

\begin{definition}
We say that a set family $\FF$ is {\bf semi-uncrossable} if for any $A,B \in \FF$ the following 
{\bf  semi-uncrossability condition} holds
$$
A \cap B \in \FF \mbox{ and one of } A \cup B,A \sem B ,B \sem A \mbox{ is in } \FF \ \ \mbox{ or } \ \ A \sem B,B \sem A \in \FF \ .
$$
\end{definition} 

One can see that if $\FF$ is symmetric and if
$A \cap B \in \FF$ or $A \sem B,B \sem A \in \FF$ whenever $A,B \in \FF$,  
then $\FF$ is uncrossable. 
Indeed, if $A,B \in \FF$ then $\b{A},\b{B} \in \FF$ (since $\FF$ is symmetric), and  
thus $\b{A} \cap \b{B} \in \FF$ or $\b{A} \sem \b{B},\b{B}\sem \b{A} \in \FF$; 
by symmetry we get that $A \cup B \in \FF$ or $B \sem A,A \sem B \in \FF$.
In particular, we get that:
$$
\mbox{\em If $\FF$ is symmetric and semi-uncrossable then $\FF$ is uncrossable.} 
$$
Namely, for symmetric $\FF$ the definitions of ``uncrossable'' and ``semi-uncrossable'' coincide.
However, for non-symmetric families, these definitions are distinct. For illustration,
consider the following example, see Fig.~\ref{f:eg}. 

\begin{figure} \centering \includegraphics{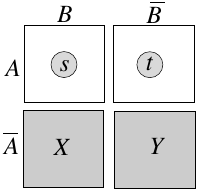}
\caption{An example of a semi-uncrossable family 
$\FF=\{A,B,\{s\},\{t\}\}$ that is not uncrossable.}
\label{f:eg} \end{figure}

\medskip

\noindent
{\bf Example 1.}
Let $V=X \cup Y \cup \{s,t\}$ where $X,Y \neq \empt$, let $A=\{s,t\}$ and $B=X \cup \{s\}$. Let 
$$
\FF=\{A,B, A \cap B,A \sem B\}=\{A,B,\{s\},\{t\}\} \ .
$$
This $\FF$ is semi-uncrossable but not uncrossable.
Note that an attempt to ``symmetrize'' $\FF$ by adding to $\FF$ the complements of the sets in $\FF$ 
results in a family that is {\em not} semi-uncrossable (and hence not uncrossable), where 
the semi-uncrossability condition fails for $\b{A},\b{B}$.
Thus semi-uncrossable families strictly generalize uncrossable set families. 
We prove the following.

\begin{theorem} \label{t:main}
{\sfec} with semi-uncrossable $\FF$ admits approximation ratio $2$.
\end{theorem}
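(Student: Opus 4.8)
The plan is to follow the classic Williamson–Goemans–Mihail–Vazirani (WGMV) primal-dual scheme verbatim: maintain an edge set $J$, grow dual variables $y_S$ uniformly on the inclusion-minimal members of the residual family $\FF^J$ until some edge goes tight, add a maximal such tight edge set, repeat until $\FF^J=\empt$, and then perform a reverse-delete phase that removes edges (in reverse order of addition) whose deletion keeps $J$ feasible. Feasibility at termination and the complementary-slackness bound $\sum_e c_e = \sum_S y_S$ are immediate from the construction and do not use uncrossability. The entire content of the $2$-approximation is the inequality $\sum_{S \in \LL} d_J(S) \le 2|\LL|$ in each iteration, where $\LL$ is the current collection of active minimal sets and $J$ is the final (post-reverse-delete) solution; summing this against the dual growth yields $\sum_e c_e \le 2\,\mathrm{opt}$ in the standard way.

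To prove that key inequality I would reproduce the WGMV laminarity/witness argument, checking at each use of the uncrossability axiom that the weaker semi-uncrossability axiom suffices. First, I would show that the inclusion-minimal members of $\FF^J$ for \emph{any} edge set are pairwise disjoint: this is the place where the old proof crosses two minimal sets $A,B$ with $A\cap B \ne \empt$, $A \not\subs B$, $B \not\subs A$, and derives a contradiction to minimality. Under semi-uncrossability, from $A,B \in \FF^J$ we get either $A \cap B \in \FF$, or $A\sem B, B \sem A \in \FF$; in the first subcase, since $d_J(A)=d_J(B)=0$ forces $d_J(A\cap B)=0$ (by submodularity of $d_J$), $A\cap B \in \FF^J$ contradicts minimality of $A$; in the second subcase $A \sem B \subsetneq A$ with $d_J(A\sem B)=0$ (again by submodularity) contradicts minimality of $A$. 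So the active sets $\LL$ are disjoint, exactly as before. Next, the core counting argument: fix an iteration, let $J$ be the final solution, and consider the ``witness'' structure — for each $e \in J$ surviving reverse-delete there is a set $W_e \in \FF$ that $e$ is the unique $J$-edge crossing at the moment $e$ was tested for deletion; these witness sets can be chosen laminar by the usual uncrossing, and one argues by a charging/token argument on the forest of laminar witnesses that $\sum_{S \in \LL} d_J(S) \le 2|\LL|$. Each uncrossing step there takes two witness sets $A,B$ and replaces them by $A\cap B, A\cup B$ or by $A\sem B, B\sem A$; I would verify that semi-uncrossability delivers enough of these sets in $\FF$ (and that the chosen replacements still have $J$ restricted appropriately, using submodularity/posimodularity of $d_J$ to keep the "unique crossing edge" property) to push the laminar-uncrossing through. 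The asymmetry of semi-uncrossability — we always get $A\cap B$ but only \emph{one} of the three "difference/union" sets — is exactly what has to be accommodated: whenever the WGMV proof would have used $A\cup B$, I must check that having instead $A\sem B$ or $B\sem A$ (together with $A\cap B$) still closes the argument, and when it would have used $A\sem B,B\sem A$ together, that branch of the axiom is available verbatim.

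The main obstacle, and where I expect to spend the most care, is this last point: the WGMV laminar-witness argument in a couple of places genuinely wants the pair $A\cap B, A\cup B$ simultaneously, and semi-uncrossability only guarantees $A\cap B$ plus \emph{one} of $A\cup B, A\sem B, B\sem A$. I would handle this by re-examining each such step and showing that the "inner" set $A\cap B$ alone already suffices to continue the charging (it is the large set $A\cup B$ that is typically needed, if at all, only to bound the total, and one can often route around it by using disjointness of the active sets $\LL$), and in the residual cases one falls into the $A\sem B, B\sem A$ branch which is handled exactly as in the original proof. A clean way to organize this is to prove a structural lemma: for semi-uncrossable $\FF$ and a feasible $J$ of the residual problem, the family of minimal witness sets can be uncrossed to a laminar family all of whose members lie in $\FF$ and each of which is crossed by exactly one edge of $J$ — once that lemma is in hand, the remainder of the WGMV accounting is unchanged. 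Finally I would remark that polynomial running time is inherited since the only oracle needed is computation of the inclusion-minimal members of $\FF^I$, which is assumed, and the reverse-delete phase uses only feasibility checks of the same residual oracle.
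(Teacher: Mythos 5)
Your overall plan coincides with the paper's proof: the same WGMV primal--dual algorithm with reverse delete, the same reduction to the per-iteration inequality $\sum_{C\in\CC} d_I(C)\le 2|\CC|$ over the cores $\CC$ of the residual family, closure of semi-uncrossability under passing to residual families, and pairwise disjointness of the cores (your submodularity argument is fine, though you should record the slightly stronger fact that a core cannot cross \emph{any} member of $\FF$, not just another core --- this is what makes the laminar witness family together with the cores a laminar family, which the final degree-counting lemma requires).

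However, the one step that carries the entire weight of the theorem --- that an inclusion-minimal cover $I$ of a semi-uncrossable family admits a \emph{laminar} witness family --- is exactly the step you leave as a to-do (``I would verify\dots'', ``I would handle this by re-examining each such step\dots''). You correctly isolate it as the structural lemma to be proved, but you give no argument for it, only the expectation that $A\cap B$ ``alone suffices'' or that one ``falls into the $A\setminus B,\, B\setminus A$ branch''. The mechanism that actually makes this work is absent from your proposal: given crossing witness sets $A$ (for $e$) and $B$ (for $f$), no edge of $I\setminus\{e,f\}$ covers any of $A\cap B$, $A\cup B$, $A\setminus B$, $B\setminus A$ (such an edge would cover $A$ or $B$); hence any of these four sets covered by neither $e$ nor $f$ is covered by no edge of $I$ at all and therefore \emph{cannot belong to} $\FF$, since $I$ covers $\FF$. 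A case analysis on the possible positions of $e$ and $f$ relative to $A$ and $B$ then shows that this feasibility argument always eliminates the ``useless'' branch of the semi-uncrossability condition and forces it to deliver a set among $A\cap B$, $A\setminus B$, $B\setminus A$ that is crossed by exactly one of $e,f$, i.e., a valid replacement witness; iterating with a suitable potential yields laminarity. Without this argument (or an equivalent one) the proof is not complete: everything else in your write-up is routine WGMV machinery, and the theorem is true precisely because this case analysis closes under the weakened axiom. (A small additional slip: tightness of the edges in $I$ gives $\sum_{e\in I}c_e=\sum_{S\in\FF}d_I(S)\,y_S$, not $\sum_{S}y_S$; your subsequent reduction to the core inequality is nevertheless the right one.)
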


The Theorem~\ref{t:main} algorithm is identical to the algorithm of Williamson, Goemans, Mihail, and Vazirani \cite{WGMV} for uncrossable families,
and the proof of the approximation ratio is also similar. 
Hence the contribution of our paper is not technical but rather conceptual, as follows.
\begin{itemize}
\item
We define a new class of set families that strictly contains the class of uncrossable families,
and show that the \cite{WGMV} algorithm achieves for this new class the same approximation ratio~$2$.
\item
We identify many combinatorial algorithmic problems that can be modeled by this class.  
\end{itemize}

Recently, Bansal, Cheriyan, Grout, and Ibrahimpur \cite{BCGI} defined a new 
class of set families which they called {\bf pliable set families with property $(\gamma)$};
a set family $\FF$ is {\bf pliable} if for any $A,B \in \FF$ at least two of the sets 
$A \cap B, A \cup B, A \sem B,B \sem A$ belongs to $\FF$, 
while {\bf property ($\gamma$)} imposes some restrictions on triples of sets from $\FF$.
They showed that the primal-dual algorithm of \cite{GW,WGMV} achieves approximation ratio $16$ for this class. 
Property $(\gamma)$ holds for semi-uncrossable families,
and the class of pliable set families with property $(\gamma)$ strictly includes the class of semi-uncrossable families. 
But handling this more general class of families comes with the price of a worse approximation ratio -- $16$ instead of $2$. 
The class of semi-uncrossable set families is sandwiched between 
the uncrossable families and the pliable families with property $(\gamma)$,
and our approximation ratio is $2$ -- the same as for uncrossable set families. 
Our paper is partly motivated by the result of \cite{BCGI}.
Another motivation comes from the {\sc Unbalanced Point-To-Point Connection} problem,
where the relevant set family $\FF$ has the property that 
$A \cap B \in \FF$ or $A \sem B,B \sem A \in \FF$; see \cite{HKKN}. 
This problem admits approximation ratio $O(\log n)$, 
but has no known super-constant approximation threshold. 

We illustrate application of Theorem~\ref{t:main} on {\bf combinations} of several fundamental problems,
among them  {\sfo}, {\tj}, {\sc Point-to-Point Connection}, $(T,k)$-{\cfo}, $k$-{\cfo}, and {\sna}; 
see formal definitions in the next section. 
We can combine any two of these problems by requiring that the solution will satisfy the requirements of both problems.
This means that the {\sfec} formulation of both problems has the same graph with the same edge costs, 
and we need to cover the union of the set families of the problems.\footnote{This 
definition of ``combination of problems'' depends on the specific set family that models the problem. 
For example, the set family of the {\sc Steiner Tree} problem is symmetric,
but the set family of the equivalent rooted version is not. 
For all problems considered, we will specify the set family we choose, 
trying to maximize the number of ``good'' properties, like uncrossability and symmetry.} 
Namely, if in one problem we need to cover a set family $\AA$ and in the other a set family $\BB$, then in 
the combination of the problems we need to cover the set family $\FF=\AA \cup \BB$.
In fact, in this way we can also combine a problem with itself. 
We will assume that each of $\AA,\BB$ is semi-uncrossable, 
as otherwise we should not expect that $\AA \cup \BB$ will be semi-uncrossable.
We will prove the following. 

\begin{theorem} \label{t:comb}
Let $\AA,\BB$ be semi-uncrossable set families over a groundset $V$. 
Then $\AA \cup \BB$ is semi-uncrossable in the following cases:
\begin{enumerate}[(i)]
\item
$\AA$ is monotone.
\item
$\AA$ is proper and $\BB$ is disjointness compliable.
\end{enumerate}
\end{theorem}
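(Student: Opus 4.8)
The plan is to fix $A,B\in\FF=\AA\cup\BB$ and verify the semi-uncrossability condition directly, splitting into cases according to which of $\AA,\BB$ contain $A$ and $B$. First I would record two reductions. (a)~If $A,B$ both lie in $\AA$, or both lie in $\BB$, then the semi-uncrossability of that family already gives the condition with every occurrence of ``$\in\AA$'' (or ``$\in\BB$'') replaced by ``$\in\FF$'', since $\AA,\BB\subs\FF$; so we are done. (b)~The semi-uncrossability condition is symmetric in $A$ and $B$: it refers only to $A\cap B$, to the \emph{set} $\{A\cup B,A\sem B,B\sem A\}$, and to ``$A\sem B$ and $B\sem A$''. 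Combining (a) and (b), the only remaining case is, after possibly swapping $A$ and $B$, that $A\in\AA$ and $B\in\BB$. In this case, if $A\cap B=\empt$ then $A\sem B=A\in\AA\subs\FF$ and $B\sem A=B\in\BB\subs\FF$, so the second alternative holds; hence assume also $A\cap B\neq\empt$.

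For case (i), with $\AA$ monotone: both $A\cap B$ and $A\sem B$ are subsets of $A\in\AA$, so monotonicity gives $A\cap B\in\AA\subs\FF$ (as $A\cap B\neq\empt$), and if $A\sem B\neq\empt$ then also $A\sem B\in\AA\subs\FF$, so the first alternative holds via $A\cap B$ and $A\sem B$. If instead $A\sem B=\empt$, i.e.\ $A\subs B$, then $A\cup B=B\in\BB\subs\FF$, so the first alternative holds via $A\cap B$ and $A\cup B$. This settles (i).

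For case (ii), with $\AA$ proper and $\BB$ disjointness compliable: since $\empt\neq A\cap B$ is a subset of both $A\in\AA$ and $B\in\BB$, the disjointness condition applied in $\AA$ to the pair $(A,A\cap B)$ and in $\BB$ to the pair $(B,A\cap B)$ yields ``$A\cap B\in\AA$ or $A\sem B\in\AA$'' and ``$A\cap B\in\BB$ or $B\sem A\in\BB$'' (using the identities $A\sem(A\cap B)=A\sem B$ and $B\sem(A\cap B)=B\sem A$). Distributing gives four cases. If $A\sem B\in\AA$ and $B\sem A\in\BB$, the second alternative holds; if $A\sem B\in\AA$ and $A\cap B\in\BB$, or $A\cap B\in\AA$ and $B\sem A\in\BB$, the first alternative holds. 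The only remaining case is $A\cap B\in\AA\cap\BB$, so that $A\cap B\in\FF$ but none of $A\cup B,A\sem B,B\sem A$ has yet been placed in $\FF$. Here I would invoke symmetry of $\AA$: $\b{A}\in\AA$; if $B\sem A=\empt$ then $A\cup B=A\in\AA\subs\FF$, and otherwise $\empt\neq B\sem A\subs\b{A}\in\AA$, so the disjointness condition applied in $\AA$ to $(\b{A},B\sem A)$ gives either $B\sem A\in\AA\subs\FF$ or $\b{A}\sem(B\sem A)=\b{A}\cap\b{B}=\overline{A\cup B}\in\AA$, in which latter case $A\cup B\in\AA\subs\FF$ by symmetry; so in all subcases the first alternative holds via $A\cap B$ together with one of $B\sem A,A\cup B$.

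The main obstacle is precisely this last subcase of (ii): when $A\cap B$ lies in both families the two disjointness conditions produce nothing beyond $A\cap B$ itself, and one must notice that $B\sem A$ sits inside the $\AA$-member $\b{A}$, feed it back through the disjointness condition of $\AA$, and then use symmetry to convert the complement $\overline{A\cup B}$ into $A\cup B\in\AA$. Everything else is routine bookkeeping with the four properties; the only points to be careful about are recording that $\empt,V\notin\FF$ (so $\b{A}\neq\empt$), the symmetry of the condition (which licenses the normalization $A\in\AA$, $B\in\BB$), and the small set-algebra identities $A\sem(A\cap B)=A\sem B$, $B\sem(A\cap B)=B\sem A$, and $\b{A}\sem(B\sem A)=\overline{A\cup B}$.
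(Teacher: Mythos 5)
Your proposal is correct and follows essentially the same route as the paper: reduce to $A\in\AA$, $B\in\BB$, dispose of part (i) by monotonicity of $\AA$, and for part (ii) apply the disjointness condition of $\AA$ to $A$ and to $\b{A}$ (using symmetry to turn $\overline{A\cup B}$ into $A\cup B$) and of $\BB$ to $B$, then case-split. The only differences are cosmetic: the paper assumes $A\cap B, A\sem B, B\sem A$ all nonempty up front rather than handling the degenerate cases inline, and organizes the case analysis around which of $A\cap B, B\sem A$ lies in $\BB$ instead of your four-way split.
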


As was mentioned, for a symmetric family the definitions of ``uncrossable'' and ``semi-uncrossable'' coincide,
hence the new applications of Theorems \ref{t:main} and \ref{t:comb} are for non-symmetric families.
Consider for example the $k$-{\cfo} problem:  Given a graph $G=(V,E)$ with edge costs,
find a min-cost subgraph $H$ such that every connected component of $H$ has at least $k$ nodes. 
The set family we need to cover here is $\{S:1 \leq |S|<k\}$. This family is not symmetric, but it is monotone.  
Monotone set families were introduced by Goemans and Williamson \cite{GW-mon}, 
where they also showed several additional graph problems that can be modeled by such families.
Note that any monotone family is uncrossable, hence the problem of covering 
a monotone set family admits approximation ratio $2$.
Cou{\"{e}}toux, Davis, and Williamson \cite{CDW} improved the approximation ratio to $1.5$ by a dual fitting algorithm.

Clearly, we can achieve ratio $\al+\be$ for the problem of covering the union $\FF=\AA \cup \BB$ 
of an uncrossable family $\AA$ and a monotone family $\BB$, 
where $\al$ and $\be$ are the best known approximation ratios for covering $\AA$ and $\BB$, respectively. 
Currently, $\al=2$ \cite{WGMV} and $\be=1.5$ \cite{CDW}, so the overall approximation ratio will be $3.5$. 
Theorems \ref{t:main} and \ref{t:comb} give a better approximation ratio of $2$. 
Let us now illustrate Theorem~\ref{t:comb} and its limitations on few examples. 

\medskip

\noindent
{\bf Example 2.} 
{\em $\AA$ is monotone and $\BB$ is proper, $\AA \cup \BB$ is semi-uncrossable but not uncrossable.} \\
Consider the set family $\AA=\{S:1 \leq |S| \leq 2\}$ of the $3$-{\cfo} problem that is monotone, 
and the family $\BB=\{S:|S \cap \{s,t\}|=1\}$ of the {\sc Shortest $st$-Path} problem that is proper.
Now consider the sets $A=\{s,t\}$ and $B=X \cup \{s\}$ in Example~1, where $|X|,|Y| \geq 3$.
Then $A \in \AA$, $B \in \BB$, but among the sets $A \cap B, A \cup B, A \sem B, B \sem A$, 
only $A \cap B,A \sem B$ belong to $\AA \cup \BB$,
hence $\AA \cup \BB$ is not uncrossable. 
However, $\AA \cup \BB$ is semi-uncrossable by Theorem~\ref{t:comb}. 


\medskip

\noindent
{\bf Example 3.} {\em $\AA,\BB$ are both disjointness compliable, $\AA \cup \BB$ is not semi-uncrossable.} \\
Let $V=\{s,t,x,y\}$, $A=\{s,t\}$, $B=\{s,x\}$, and let 
$\AA=\{A, A \cap B\}=\{\{s,t\},\{s\}\}$ and $\BB=\{B, A \cap B\}=\{\{s,x\},\{s\}\}$.
Then $\AA,\BB$ are both uncrossable and disjointness compliable, $A \in \AA$, $B \in \BB$, 
but among the sets $A \cap B,A \cup B,A \sem B,B \sem A$ only $A \cap B$ belongs to $\AA \cup \BB$. 

\medskip

\noindent
{\bf Example 4.} 
{\em $\AA$ is proper and $\BB$ is uncrossable, $\AA \cup \BB$ is not semi-uncrossable.} \\ 
Let $V=\{x,y,s,t\}$, $A=\{x,s\}$, $B=\{x,y\}$, $\BB=\{B,\b{B}\}$ 
and let $\AA=\{S: |S \cap \{s,t\}|\}=1$ (this is the family of the {\sc Shortest $st$-Path} problem).
One can verify that $\AA$ is proper, $\BB$ is symmetric and uncrossable, $A \in \AA$, $B \in \BB$, 
but among the sets $A \cap B,A \cup B,A \sem B,B \sem A$ only $A \sem B,A \cup B$ belong to $\AA \cup \BB$,
so $\AA \cup \BB$ is not semi-uncrossable. 

\medskip

The proof of Theorem~\ref{t:comb} is by a standard case analysis and uncrossing. 
Yet, Theorem~\ref{t:comb} is sharp in the following sense.
\begin{itemize}
\item
In the combinations considered in the theorem, the combined set family $\FF=\AA \cup \BB$
may not be uncrossable even if $\AA,\BB$ are both uncrossable, as is shown in Example~2. 
\item
For any combination  not mentioned in Theorem~\ref{t:comb}, either 
the semi-uncrossability of $\AA \cup \BB$ can be deduced from the theorem,
or there exists an example such that $\AA \cup \BB$ is not semi-uncrossable
(namely, we cannot weaken the conditions in Theorem~\ref{t:comb}).
To see this,  note that in the following cases $\FF=\AA \cup \BB$ may not be semi-uncrossable.
\begin{itemize}
\item
$\AA,\BB$ are both disjointness compliable; see Example 3.
\item
If $\AA$ is proper and $\BB$ is symmetric; see Example 4.
\end{itemize}
\end{itemize}

We summarize this in Table~\ref{tbl:comb}.
Note that there is only entry in the table that guarantees an uncrossable family -- when both $\AA,\BB$ are proper,
while there are 5 entries where we get a semi-uncrossable family that may not be uncrossable.  
 
\begin{table} [htbp] 
\begin{center}
\begin{tabular}{|c|c|c|c|c|c|c|c|c|c|}  \hline  
                & d     & s    & m    & d,s    															
\\\hline 
d            & $-$  & $-$ & $+$ & $+$   
\\\hline 
s            &          & $-$ & $+$ & $-$    
\\\hline 
m         &          &         & $+$ & $+$ 
\\\hline
d,s        &         &         &         & \ $+^*$                          
\\\hline
\end{tabular}
\end{center}
\caption{Combinations that lead to semi-uncrossable families.
Here, ``d'',``s'', and ``m'' stand for  disjointness compliable, symmetric, and monotone, respectively. Also, 
``$+$'' means that the union is semi-uncrossable but is not uncrossable in general, 
``$+^*$'' means that the union is always uncrossable, 
while ``$-$'' means that the union is not semi-uncrossable in general (and there is an exampe for that).}
\label{tbl:comb}
\end{table}

In the next Section~\ref{s:comb} we prove Theorem~\ref{t:comb} 
and discuss some applications of Theorems \ref{t:main} and \ref{t:comb} for specific problems,
while Theorem~\ref{t:main} is proved in Section~\ref{s:main}. 

\section{Applications} \label{s:comb}

Recall that in Theorem~\ref{t:comb} we need to prove that if $\AA,\BB$ are semi-uncrossable set families then
$\AA \cup \BB$ is semi-uncrossable in the following cases:
\begin{enumerate}[(i)]
\item
$\AA$ is monotone.
\item
$\AA$ is proper and $\BB$ is disjointness compliable.
\end{enumerate}

Since each of $\AA,\BB$ is semi-uncrossable, we need to verify the semi-uncrossability condition 
only for pairs of sets $A,B$ such that $A \in \AA$ and $B \in \BB$.
Let $A \in A$ and $B \in B$.
Assume that all sets $A \cap B, A \sem B,B \sem A$ are non-empty,
as otherwise $\{A \cap B,A \cup B\}=\{A,B\}$ or $\{A \sem B,B \sem A\}=\{A,B\}$.
 
\medskip

Part (i) is trivial -- since $\AA$ is monotone, $A \cap B,A \sem B \in \AA$,
so the semi-uncrossability condition holds for $A,B$. 

\medskip

We prove (ii).  Since $\AA$ is symmetric, $\b{A} \in \AA$. 
Since $\AA,\BB$ are both disjointness compliable, the following holds:
\begin{enumerate}[(a)]
\item
$A \cap B \in \AA$ or $A \sem B \in \AA$.
\item
$\b{A} \cap B \in \AA$ or $\b{A} \sem B \in \AA$;  
note that $\b{A} \cap B=B \sem A$ and that $\b{A} \sem B=V \sem (A \cup B)$,
thus this is equivalent to $B \sem A \in \AA$ or $A \cup B \in \AA$ (by the symmetry of $\AA$). 
\item
$A \cap B \in \BB$ or $B \sem A \in \BB$.
\end{enumerate}
Let us consider the two cases in (c).
\begin{itemize}
\item
If $A \cap B \in \BB$, then the semi-uncrossability condition holds by (b).  
\item
If $B \sem A \in \BB$ then the semi-uncrossability condition holds by (a).
\end{itemize}
In both cases the semi-uncrossability condition holds, 
concluding the proof of (ii) and also of Theorem~\ref{t:comb}.

\medskip

We illustrate application of Theorems \ref{t:main} and \ref{t:comb} on combinations of several fundamental problems.
In all these problems we are given an (undirected) graph $G=(V,E)$ with edge costs $\{c_e:e \in E\}$,
and seek a subgraph of $G$ or an edge subset of $E$ that satisfies obeys a orescribed requirement. 
Below we indicate for each problem the additional parts of the input, the prescribed requirement,
the set family we need to cover, and the (strongest) relevant properties of this family.

\begin{center}
\fbox{\begin{minipage}{0.98\textwidth} \noindent
\underline{\sfo} ({\sc SF}) \\ 
Given a subpartition $\PP$ of $V$, find a min-cost subgraph $H$ of $G$ such that every part 
$P \in {\cal P}$ is contained in the same connected component of $H$. \\
{\em Set family:} $\{S: \empt \neq S \cap P \neq P \mbox{ for some } P \in \PP\}$; 
this family is proper and thus uncrossable.
\end{minipage}} \end{center}

\begin{center}
\fbox{\begin{minipage}{0.98\textwidth} \noindent
\underline{\tj} \\ 
Given a set $T \subs V$ of terminals with $|T|$ even,
find a min-cost subgraph $H$ of $G$ such that 
every connected component of $H$ contains an even number of terminals. \\
{\em Set family:} $\{S: |S \cap T| \mbox{ is odd}\}$; 
this family is proper and thus uncrossable.
\end{minipage}} \end{center}

\begin{center}
\fbox{\begin{minipage}{0.98\textwidth} \noindent
\underline{$(T,k)$-\cfo} ({\sc $(T,k)$-CF}) \\ 
Given a set $T \subs V$ of terminals and an integer $k$,
find a min-cost spanning subgraph $H$ of $G$ such that 
every connected component of $H$ has $0$ or at least $k$ terminals. \\
{\em Set family:} $\{S:1 \leq |S \cap T| <k\}$; 
this family is disjointness compliable and uncrossable. 
\end{minipage}} \end{center}

\begin{center}
\fbox{\begin{minipage}{0.98\textwidth} \noindent
\underline{$k$-{\cfo}} ({\sc $k$-CF}) \\
This is a particular case of  $(T,k)$-{\sc CF} when $T=V$. \\
{\em Set family:} $\{S:1 \leq |S| <k\}$; this family is monotone and thus uncrossable.
\end{minipage}} \end{center}

\begin{center}
\fbox{\begin{minipage}{0.98\textwidth} \noindent
\underline{\gpp} ({\sc G-P2P}) \\ 
Given integer charges $\{b(v):v \in V\}$ with $b(V)=0$,
find a min-cost spanning subgraph $H$ of $G$ such that 
every connected component of $H$ has zero charge. \\
{\em Set family:} $\{S:b(S) \neq 0\}$; 
this family is proper and thus uncrossable.
\end{minipage}} \end{center}

\begin{center}
\fbox{\begin{minipage}{0.98\textwidth} \noindent
\underline{\sna} ({\sc SNA}) \\ 
Given a graph $G_0=(V,E_0)$ with $E_0 \cap E=\empt$ and a set $D$ of demand node pairs, 
find a min-cost subgraph $H$ of $G$ such that 
$\la_{G_0 \cup H}(uv) \geq \la_{G_0}(u,v)+1$ for all $(u,v) \in D$, where
$\la_J(u,v)$ denotes the maximum number of edge disjoint $uv$-paths in a graph $J$.\\
{\em Set family:} $\{S: |S \cap \{u,v\}|=1, d_{G_0}(S)=\la_{G_0}(u,v) \mbox{ for some } \{u,v\} \in D\}$; 
this family is symmetric and uncrossable.
\end{minipage}} \end{center}

Combinations of problem pairs from the list above that 
give a semi-uncrossable family are summarized in Table~\ref{tbl:prob};
we added one additional row for the {\sc Shortest $st$-Path} problem,
that will help us to verify that the table was indeed filled correctly.   
Excluding the last row, we have overall 21 combinations, and 15 of them give a semi-uncrossable set family by Theorem~\ref{t:comb}.
Among these, 8 combinations give a set family that is not uncrossable in general. 
We now will justify whether each entry should be marked by 
``$+$'' (semi-uncrossable that may not be uncrossable), or  by ``$+^*$''  (always uncrossable),
or by ``$-$'' (may not be semi-uncrossable). 
Interestingly, Theorem \ref{t:comb} ``correctly filled'' the table -- 
the plus entries are exactly the ones that could be deduced from the theorem.

\begin{table} [htbp] 
\begin{center}
\begin{tabular}{|l|c|c|c|c|c|c|}  \hline  
                                                                                 & {\sc SF} {\bf d,s}  & {\tj} {\bf d,s}  & {\sc G-P2P} {\bf d,s} & {\sc $k$-CF} {\bf m}  & {\sc $(T,k)$-CF}	{\bf d} & {\sf SNA} {\bf s}																
\\\hline 
{\sc SF}                \hspace{1.31cm} {\bf d,s}        & $+^*$                   & $+^*$           & $+^*$                         & $+$                            & $+$                                 & $-$                                      
\\\hline 
{\tj}                         \hspace{0.7cm}   {\bf d,s}       &                              & $+^*$           &  $+^*$                         & $+$                           & $+$                                  & $-$
\\\hline 
{\sc G-P2P}          \hspace{0.7cm}  {\bf d,s}        &                              &                      & $+^*$                          & $+$                           & $+$                                  & $-$
\\\hline 
$k$-{\sc CF}        \hspace{0.55cm} {\bf m}      &                              &                      &                                     & \ $+^*$                      & $+$                                 & $+$
\\\hline 
$(T,k)$-{\sc CF}  \hspace{0.35cm}  {\bf d}         &                               &                     &                                     &                                    &  $-$                                 & $-$
\\\hline
{\sc SNA}             \hspace{1.32cm}    {\bf s}        &                              &                      &                                     &                                    &                                        & $-$
\\\hline \hline 
{\sc SP}                \hspace{1.28cm} {\bf d,s}   & $+^*$                   & $+^*$           & $+^*$                         & $+$                            & $+$                                 & $-$    
\\\hline 
\end{tabular}
\end{center}
\caption{Combinations that lead to semi-uncrossable families.
Here, ``d'',``s'', and ``m'' stand for  disjointness compliable, symmetric, and monotone, respectively 
(uncrossability is not mentioned since all problems are uncrossable). 
Also, ``$+$'' means that the union is semi-uncrossable, 
``$+^*$'' means that the union is uncrossable, 
while ``$-$'' means that the union is not semi-uncrossable in general.}
\label{tbl:prob}
\end{table}

We need the following observations to show that the table was indeed filled correctly. 
\begin{enumerate}[(A)]
\item
If $\FF$ is semi-uncrossable and symmetric then $\FF$ is uncrossable. 
\item
If $\AA,\BB$ are both monotone then $\AA \cup \BB$ is also monotone and thus uncrossable. 
\item
Except {\sc $k$-CF}, the {\sc Shortest $st$-Path} problem is a particular case of each of the problems considered here. 
\item
There is an example that the combination of {\sc Shortest $st$-Path} with {\sc $3$-CF} (and thus also with {\sc $(T,3)$-CF})
gives a semi-uncrossable family that is not uncrossable.
\item
There is an example that the combination of {\sc Shortest $st$-Path} with {\sc SNA} gives a set family that is not semi-uncrossable.
\end{enumerate}

Observation (A) was already proved in the introduction while Observation (B) is trivial.
For Observation (C),  one can verify that 
{\sc Shortest $st$-Path} is a particular case of: 
\begin{itemize}
\item {\sc SF} when $\PP=\{\{s,t\}\}$. 
\item  {\sc $T$-Join} when $T=\{s,t\}$.
\item {\sc G-P2P} when $b(s)=1,b(t)=-1$, and $b(v)=0$ otherwise. 
\item {\sc $(T,k)$-CF} when $T=\{s,t\}$ and $k=2$. 
\item {\sc SNA} when $E_0=\empt$ and $D=\{\{s,t\}\}$.
\end{itemize}

Observation (D) was proved in the Introduction, see Example~2. 

We now consider Observation (E). 
Note that in the combined problem of {\sc Shortest $st$-Path} with {\sc SNA}
we cannot just add the pair $\{s,t\}$ to the set $D$ of demand pairs, because 
we require that the augmenting graph $H$ will contain an $st$-path, 
and not the weaker condition that $G_0 \cup H$ will contain an $st$-path. 
Consider now the families $\AA,\BB$ in Example 4 in the Introduction. 
Let $E_0=\{xy,st\}$. Then (independently of the edge set $E$ and costs),
$\BB$ is the family of the {\sc SNA} instance and 
$\AA$ is the family of the {\sc Shortest $st$-Path} instance.
But in Example 4 the family $\AA \cup \BB$ is not semi-uncrossable. 

\medskip

We use these observations to justify the last row of the table -- of the {\sc Shortest $st$-Path} problem. 
The first 3 entries in the row are by Theorem~\ref{t:comb}(ii). 
The next two entries are by Theorem~\ref{t:comb}(i) and Observation~(D).
The last entry in the row is by Observation (E). 

In a similar way we can fill the other entries.
Note that observation (D) is used only for the plus entries, to decide 
whether the entry should be ``$+$'' or ``$+^*$''.

\medskip

\noindent
{\bf The first 3 columns:}
Here all combinations give a semi-uncrossable family by Theorem~\ref{t:comb}(ii).
The family is also uncrossable, since the combined set family is symmetric. 

\medskip

\noindent
{\bf The {\sc $k$-CF} column.}
The first 3 entries in the column follow from Theorem~\ref{t:comb}(i) and Observations (C) and (D).
The fourth entry is by Observation (B).
\medskip

\noindent
{\bf The {\sc $(T,k)$-CF} column.}
All positive entries in the column of {$(T,k)$-CF} are by Theorem~\ref{t:comb} and Observations (C) and (D);
in the first 3 entries of the column, instead of Observation (C) and (D) we can use 
the fact that  {\sc $k$-CF} is a particular case of {\sc $(T,k)$-CF}.

\medskip

\noindent
{\bf The {\sc SNA} column.}
All the minus entries in this column follow from Observations (C) and (E). 
The unique plus entry in this column is by Theorem~\ref{t:comb}(i) and Observation (D). 

\newpage

\section{Proof of Theorem~\ref{t:main}} \label{s:main}

In this section we prove Theorem~\ref{t:main}.
During the proof we indicate the parts that extend or fail to pliable set families;
recall that a set family $\FF$ is pliable if for any $A,B \in \FF$ at least $2$ sets
from $A \cap B,A \cup B,A \sem B,B \sem A$ belong to $\FF$. 
Also recall that the residual family of $\FF$ w.r.t. an edge set $I$ is $\FF^I=\{S \in \FF:d_I(S)=0\}$. 

One can see that if an edge $e$ covers one of the sets $A \cap B, A \cup B, A\sem B,B \sem A$ 
then it also covers one of $A,B$. This implies the following.

\begin{corollary} \label{c:res}
If $\FF$ is semi-uncrossable or if $\FF$ is pliable then so is $\FF^I$, for any edge set $I$.
\end{corollary}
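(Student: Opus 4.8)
The plan is to deduce the corollary entirely from the displayed observation that precedes it: any edge covering one of $A\cap B$, $A\cup B$, $A\sem B$, $B\sem A$ also covers one of $A,B$. First I would spell out why that observation holds. Writing $e=xy$, an edge covers a set $S$ precisely when exactly one of $x,y$ lies in $S$. If $e$ covers $A\cap B$, say with $x\in A\cap B$ and $y\notin A\cap B$, then $y$ misses $A$ or misses $B$, so $e$ covers $A$ or $B$; the cases of $A\cup B$, $A\sem B$, and $B\sem A$ are analogous short checks using the obvious set-theoretic identities. This is the only computation in the argument, and it is routine.

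Next, fix an edge set $I$ and two sets $A,B\in\FF^I$, so that $A,B\in\FF$ and $d_I(A)=d_I(B)=0$. The key step is: each of the four derived sets $A\cap B$, $A\cup B$, $A\sem B$, $B\sem A$ satisfies $d_I(\cdot)=0$. Indeed, if some $e\in I$ covered one of them, the observation would give that $e$ covers $A$ or $B$, contradicting $d_I(A)=d_I(B)=0$. Consequently, whichever of these four sets happen to lie in $\FF$ also lie in $\FF^I$.

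Finally, apply the hypothesis on $\FF$ to the pair $A,B$. If $\FF$ is semi-uncrossable, then either $A\cap B\in\FF$ together with one of $A\cup B,A\sem B,B\sem A\in\FF$, or $A\sem B,B\sem A\in\FF$; by the previous step every set named in the realized alternative also lies in $\FF^I$, so the semi-uncrossability condition holds for the pair $A,B$ within $\FF^I$. Similarly, if $\FF$ is pliable, at least two of the four derived sets are in $\FF$, hence in $\FF^I$, so the pliability condition holds for $A,B$ within $\FF^I$. Degenerate cases are harmless: if, say, $A\sem B=\empt$ then $\{A\cap B, A\cup B\}=\{A,B\}\subseteq\FF^I$, so the condition holds trivially, and similarly when $A\cap B=\empt$ or $B\sem A=\empt$. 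Since $A,B\in\FF^I$ were arbitrary, $\FF^I$ is semi-uncrossable (respectively pliable).

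I do not expect a genuine obstacle here: all the content is in the observation, and once it is available the corollary follows immediately. The only points requiring a little care are phrasing the conclusion so as to respect the disjunction in the semi-uncrossability condition and disposing of the empty-set degeneracies, but neither amounts to more than a line.
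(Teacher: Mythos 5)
Your proposal is correct and follows exactly the paper's argument: the paper states the same observation (an edge covering one of $A\cap B$, $A\cup B$, $A\sem B$, $B\sem A$ covers one of $A,B$) and derives the corollary from it immediately, while you simply spell out the routine details. Nothing is missing or different in substance.
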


An {\bf $\FF$-core} is an inclusion minimal member of $\FF$; let $\CC(\FF)$ denote the family of $\FF$-cores. 
Similarly to uncrossable families, we have the following.

\begin{lemma} \label{l:CX}
If $\FF$ is semi-uncrossable then for any $C \in \CC(\FF)$ and $S \in \FF$,
either $C \subs S$ or $C \cap S=\empt$; in particular, the $\FF$-cores are pairwise disjoint. 
\end{lemma}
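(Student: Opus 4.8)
The plan is to argue by contradiction: suppose $C \in \CC(\FF)$ and $S \in \FF$ satisfy $C \cap S \neq \empt$ and $C \not\subs S$. Then all three sets $C \cap S$, $C \sem S$, $S \sem C$ are non-empty (for $C \sem S$: this is exactly $C \not\subs S$; for $C \cap S$: this is the assumption $C \cap S \neq \empt$; for $S \sem C$: if it were empty then $S \subs C$, and since $S \neq \empt$ and $S \in \FF$, minimality of $C$ would force $S = C$, contradicting $C \not\subs S$). Now apply the semi-uncrossability condition to the pair $C, S \in \FF$. In the first case, $C \cap S \in \FF$; since $\empt \neq C \cap S \subsetneq C$ (proper containment because $C \sem S \neq \empt$), this contradicts the minimality of $C$ as an $\FF$-core. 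In the second case, $C \sem S, S \sem C \in \FF$; since $\empt \neq C \sem S \subsetneq C$ (proper containment because $C \cap S \neq \empt$), again $C \sem S \in \FF$ contradicts the minimality of $C$. Either way we reach a contradiction, so $C \subs S$ or $C \cap S = \empt$.

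The ``in particular'' claim then follows immediately: if $C, C' \in \CC(\FF)$ are distinct $\FF$-cores, apply the first part with $S = C'$. Either $C \cap C' = \empt$, and we are done, or $C \subs C'$; but then minimality of $C'$ (applied to the member $C \in \FF$, which is non-empty since $\empt \notin \FF$) gives $C = C'$, contradicting distinctness. Hence $C \cap C' = \empt$.

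I expect no serious obstacle here — the argument is the standard uncrossing argument for cores, and the only point that needs a little care is to check that whichever case of the semi-uncrossability dichotomy occurs, it produces a set that is both in $\FF$ and a \emph{strict} non-empty subset of $C$. Both of the useful cases ($C \cap S \in \FF$ in the first branch, $C \sem S \in \FF$ in the second branch) do this, using the two non-emptiness facts $C \sem S \neq \empt$ and $C \cap S \neq \empt$ respectively. Note that we never need to know which of $C \cup S, C \sem S, S \sem C$ accompanies $C \cap S$ in the first branch; it is enough that $C \cap S$ itself lies in $\FF$. This is precisely the feature of semi-uncrossability (as opposed to mere pliability) that makes the proof go through verbatim as in the uncrossable case.
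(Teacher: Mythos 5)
Your proof is correct and is exactly the standard uncrossing argument the paper has in mind (the paper omits the proof, remarking only that it goes ``similarly to uncrossable families''). The key point you correctly isolate --- that each branch of the semi-uncrossability dichotomy yields a member of $\FF$ that is a non-empty proper subset of $C$, contradicting minimality --- is the whole content of the lemma.
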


Note that if $\FF$ is pliable then the cores are also pairwise disjoint.
However, the first property in Lemma~\ref{l:CX} may not hold for pliable families.

We now describe the algorithm. 
Consider the following LP-relaxation {\bf (P)} for {\sfec} and its dual program {\bf (D)}:
\[ \displaystyle
\begin{array} {lllllll} 
&  \hphantom{\bf (P)} & \min         & \ \displaystyle \sum_{e \in E} c_e x_e & 
   \hphantom{\bf (P)} & \max         & \ \displaystyle \sum_{S \in \FF} y_S  \\
&      \mbox{\bf (P)} & \ \mbox{s.t.}  & \displaystyle \sum_{e \in \delta(S)} x_e  \geq 1 \ \ \ \ \ \forall S \in \FF \ \ \ \ \ \ \  &
       \mbox{\bf (D)} & \ \mbox{s.t.}  & \displaystyle \sum_{\delta(S) \ni e} y_S \leq c_e \ \ \ \ \ \forall e \in E \\
&  \hphantom{\bf (P)} &              & \ \ x_e \geq 0 \ \ \ \ \ \ \ \ \ \ \ \forall e \in E &
   \hphantom{\bf (P)} &              & \ \ y_S \geq 0 \ \ \ \ \ \ \ \ \ \ \ \ \forall S \in \FF. 
\end{array}
\]

Given a solution $y$ to {\bf (D)}, an edge $e \in E$ is {\bf tight}
if the inequality of $e$ in {\bf (D)} holds with equality.
The algorithm has two phases.

{\bf Phase~1} starts with $I=\emptyset$ an applies a sequence of iterations.
At the beginning of an iteration, we compute the family $\CC(\FF^I)$.
Then we raise the dual variables corresponding to the members of $\CC(\FF^I)$
uniformly (possibly by zero), until some edge $e \in E \sem I$ becomes tight, and add $e$ to $I$.
Phase I terminates when $\CC(\FF^I)=\empt$, namely when $I$ covers $\FF$.

{\bf Phase~2} applies on $I$ ``reverse delete'', which means the following.
Let $I=\{e_1, \ldots, e_j\}$, where $e_{i+1}$ was added after $e_i$. 
For $i=j$ downto $1$, we delete $e_i$ from $I$ if $I \sem \{e_i\}$ still covers $\FF$.
At the end of the algorithm, $I$ is output.

It is easy to see that the produced dual solution is feasible, hence 
$\sum_{S \in \FF} y_S \leq {\sf opt}$, by the Weak Duality Theorem.
We prove that at the end of the algorithm 
$$
\sum_{e \in I} c(e) \leq 2\sum_{S \in \FF} y_S \ .
$$
As any edge in $I$ is tight, the last inequality is equivalent to
$$
\sum_{e \in I} \sum_{\de_l(S) \ni e} y_S \leq 2 \sum_{S \in \FF} y_S \ .
$$
By changing the order of summation we get:
$$
\sum_{S \in \FF} d_I(S) y_S \leq 2 \sum_{S \in \FF} y_S \ .
$$
It is sufficient to prove that at any iteration the increase at the left hand side is at most
the increase in the right hand side. 
Let us fix some iteration, and let $\CC$ be the family of cores among 
the members of $\FF$ not yet covered. The increase in the left hand side is 
$\varepsilon \cdot \sum_{C \in \CC} d_I(C)$, 
where $\varepsilon$ is the amount by which the dual variables were raised in the iteration, 
while the increase in the right hand side is 
$\varepsilon \cdot 2 |\CC|$. Consequently, it is sufficient to prove that
$\sum_{C \in \CC} d_I(C) \leq 2 |\CC|$.
As the edges were deleted in reverse order, the set $I'$ of edges in $I$ that were 
added after the iteration (and ``survived'' the reverse delete phase),
form an inclusion minimal edge-cover of the family $\FF'$ of members 
in $\FF$ that are uncovered at the beginning of the iteration.
Note also that $\bigcup_{C \in \CC} \delta_{I}(C) \subseteq I'$.
Hence to prove ratio $2$, it is sufficient to prove the following purely combinatorial statement, 
in which due to Corollary~\ref{c:res} we can revise our notation to $\FF \gets \FF'$ and $I \gets I'$.

\begin{lemma} \label{l:local-ratio}
Let $I$ be an inclusion minimal cover of a set family $\FF$ and let $\CC=\CC(\FF)$ be the family of $\FF$-cores. 
If $\FF$ is semi-uncrossable then
\begin{equation} \label{e:CC}
\sum_{C \in \CC} d_I(C) \leq 2 |\CC|-1 \ .
\end{equation}
\end{lemma}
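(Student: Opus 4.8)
The plan is to mimic the classical Williamson--Goemans--Mihail--Vazirani argument for uncrossable families, adapting the laminar/witness structure to the semi-uncrossable setting. First I would set up a \emph{witness family}: for each edge $e\in I$, minimality of $I$ gives a set $S_e\in\FF$ with $\de_I(S_e)=\{e\}$, i.e.\ $e$ is the unique edge of $I$ crossing $S_e$. Let $\LL$ be an inclusion-minimal subfamily of $\{S_e : e\in I\}$ that still ``witnesses'' every edge, in the sense that every $e\in I$ lies in $\de_I(S)$ for some $S\in\LL$ and each $S\in\LL$ is needed. The core step is to show $\LL$ can be chosen to be \emph{laminar}. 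Given two crossing members $S,S'\in\LL$ (all of $S\cap S'$, $S\sem S'$, $S'\sem S$ nonempty), apply the semi-uncrossability condition. If $S\sem S', S'\sem S\in\FF$, the standard edge-counting shows $\de_I(S\sem S')\cup\de_I(S'\sem S)\subs \de_I(S)\cup\de_I(S')$ and that replacing $S,S'$ by $S\sem S',S'\sem S$ still witnesses all edges witnessed by $S,S'$; a potential-function/uncrossing argument (e.g.\ minimize $\sum_{S\in\LL}|S|^2$ or the number of crossing pairs) then drives $\LL$ to laminar. The genuinely new case is when instead $S\cap S'\in\FF$ and exactly one of $S\cup S',S\sem S',S'\sem S$ is in $\FF$; here I would argue that the ``mixed'' replacement $\{S\cap S', S\cup S'\}$ or $\{S\cap S', S\sem S'\}$ (whichever lies in $\FF$) still dominates the witnessed edges and strictly decreases the potential --- this is the analogue of the uncrossable ``$A\cap B, A\cup B$'' case but now always available in some form.

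Once $\LL$ is laminar and witnesses all of $I$, I would pass to the forest/tree structure of $\LL$ ordered by inclusion. Contract each $\FF$-core; by Lemma~\ref{l:CX} the cores are pairwise disjoint and each core is either contained in or disjoint from every $S\in\FF$, so each core sits inside a unique minimal member of $\LL$ and the cores are exactly (a subset of) the leaves of the laminar forest --- more precisely, every minimal member of $\LL$ contains an $\FF$-core, and by replacing minimal members with cores if necessary we may assume the leaves of $\LL$ are precisely the members of $\CC$ (or a superset; the bound only gets easier if $\LL$ has extra leaves, so WLOG $\LL$ has exactly $|\CC|$ leaves). Then the standard counting argument applies: in the laminar forest on $\LL$, assign to each $S\in\LL$ the edges of $\de_I(S)$; since $I$ is a minimal cover witnessed by $\LL$, one shows $\sum_{S\in\LL} d_I(S)\le 2|\LL|-1$ by induction on $|\LL|$ (a leaf contributes, a degree-bound on internal nodes of the forest arises because $I$ is minimal), and then $\sum_{C\in\CC}d_I(C)\le \sum_{S\in\LL}d_I(S)\le 2|\LL|-1$. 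Finally I would argue $|\LL|=|\CC|$ (or $\le$ after the leaf-normalization above), giving \eqref{e:CC}.

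The main obstacle, and the place I expect the real work, is the uncrossing step that establishes laminarity of a suitable witness family. In the uncrossable case one always has a symmetric dichotomy ($A\cap B,A\cup B$ or $A\sem B,B\sem A$), and both branches reduce the potential cleanly. Under semi-uncrossability the first branch is weaker: we only get $A\cap B$ together with \emph{one} of $A\cup B, A\sem B, B\sem A$. I need to check that in every sub-case the replacement family (i) still lies in $\FF$, (ii) still witnesses --- with the singleton property $\de_I(\cdot)$ being a single edge, or at least jointly witnesses all the relevant edges of $I$ --- and (iii) strictly decreases the chosen potential, so that the process terminates at a laminar family. I expect one needs to be careful about which edge each replacement set witnesses: if $e$ is the unique $I$-edge crossing $A$ and $f$ the unique one crossing $B$, then $e,f\in\de_I(A\cap B)\cup\de_I(A\cup B)\cup\de_I(A\sem B)\cup\de_I(B\sem A)$, and one must verify that the two sets we keep cover both $e$ and $f$; this is where the precise form of the semi-uncrossability condition (conjunction with $A\cap B$) is exactly what makes it work, just as in \cite{WGMV}. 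The remaining parts --- Lemma~\ref{l:CX}-style core properties and the laminar counting $\sum d_I \le 2|\LL|-1$ --- are routine adaptations of the classical proof.
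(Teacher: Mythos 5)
Your high-level architecture (a laminar witness family obtained by uncrossing, followed by a counting argument over the resulting laminar forest) is the same as the paper's, but both halves of your plan have gaps as written. In the uncrossing step, the replacements you name are not the right ones, and the mechanism that makes the lemma true is missing. The paper's argument is: if $A,B\in\LL$ cross and are witnesses for $e,f$, then $\de_I(X)\subs\{e,f\}$ for every corner set $X\in\{A\cap B,A\cup B,A\sem B,B\sem A\}$; a short case analysis on where $e,f$ sit shows that some corner set $X_0$ has $\de_I(X_0)=\empt$, hence $X_0\notin\FF$ (since $I$ covers $\FF$), and the semi-uncrossability condition then \emph{forces} one of the remaining corner sets to lie in $\FF$, while the same case analysis shows that this forced set satisfies $|\de_I(\cdot)|=1$, i.e.\ it is a witness for one of $e,f$. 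One then replaces only \emph{one} of $A,B$ by this set (keeping the other), which strictly decreases $\sum_{S\in\LL}|S|$. Your proposed ``mixed'' replacement $\{S\cap S',S\cup S'\}$ can fail outright: when $e$ runs from $A\cap B$ to $B\sem A$ and $f$ from $A\cap B$ to $A\sem B$, one has $\de_I(A\cap B)=\{e,f\}$ and $\de_I(A\cup B)=\empt$, so neither set is a witness set (and $A\cup B\notin\FF$); the sets that work here are $A\sem B$ and $B\sem A$, at least one of which is in $\FF$ precisely because $A\cup B$ is not. You flag this verification as ``the real work,'' but it is the entire content of the laminarity lemma and is not supplied.

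The second gap is more serious because the step is false as stated. You bound $\sum_{C \in \CC} d_I(C)$ by $\sum_{S \in \LL} d_I(S)$; but every $S\in\LL$ is a witness set, so $d_I(S)=1$ and the right-hand side equals $|\LL|$, and the inequality fails. For instance, take disjoint cores $C_1,C_2$ inside a set $S\in\FF$, with $e_1$ leaving $S$ from $C_1$ and $e_2$ joining $C_1$ to $C_2$ inside $S$: then $I=\{e_1,e_2\}$ is a minimal cover, $\sum_{C} d_I(C)=3$, while $\LL=\{S,C_2\}$ gives $\sum_{S\in\LL}d_I(S)=2$. Likewise $|\LL|=|I|$ need not equal $|\CC|$. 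The correct counting (the paper's Lemma~\ref{l:2C}) works with the \emph{combined} laminar family $\LL\cup\CC$ --- whose laminarity requires Lemma~\ref{l:CX} --- restricts attention to $J=\bigcup_{C\in\CC}\de_I(C)$, and proves $\sum_{C}d_J(C)\le 2|\CC|-1$ by induction on $|\CC|$, using that an inclusion-minimal $S_e\in\LL$ contains a core $C$ with $\de_J(C)=\{e\}$ and that at most two sets of $\LL$ contain $C$ and no other core. Some version of this charging of core-boundary edges to the forest structure is needed; summing $d_I$ over the witness sets does not yield it.
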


In the rest of this section we prove Lemma~\ref{l:local-ratio}.
Let us say that {\bf two sets $A,B$ are laminar} if they are disjoint or one of them contains the other;
namely, if at least one of the sets $A \cap B, A \sem B, B \sem A$ is empty. 
A set family $\LL$ is a {\bf laminar set family} if its members are pairwise laminar. 
Let $I$ be an inclusion minimal edge cover of a set family $\FF$. 
We say that $S_e \in \FF$ is a {\bf witness set} for $e$ if 
$e$ is the unique edge in $I$ that covers $S_e$, namely, if $\de_I(S_e)=\{e\}$.
We say that $\LL \subs \FF$ is a {\bf witness family} for $I$ if 
$|\LL|=|I|$ and for every $e \in I$ there is a witness set $S_e \in \LL$.
By the minimality of $I$, there exists a witness family $\LL \subs \FF$.
We will show the following.

\begin{lemma} \label{l:witness}
Let $I$ be an inclusion minimal cover of a set family $\FF$. 
If $\FF$ is semi-uncrossable then there exists a witness family $\LL \subs \FF$ for $I$ that is laminar.
\end{lemma}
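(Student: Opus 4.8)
The plan is to start from an arbitrary witness family $\LL$ for $I$ (which exists by minimality of $I$) and repair it by a standard uncrossing argument, using the semi-uncrossability condition in place of the usual uncrossability condition. Concretely, I would fix for every edge $e\in I$ a witness set $S_e\in\FF$ with $\de_I(S_e)=\{e\}$, and among all choices of such a witness family choose one minimizing the potential $\Phi(\LL)=\sum_{S\in\LL}|S|^2$ (or any strictly convex function of the cardinalities, so that uncrossing two crossing sets strictly decreases it). Suppose toward a contradiction that in this minimizer there are two sets $A=S_e$, $B=S_f$ that cross, i.e. all of $A\cap B$, $A\sem B$, $B\sem A$ are nonempty. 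Since $e\ne f$ (distinct witnesses are distinct sets, and a crossing pair must have distinct witnesses), the key observation is the one already used in the text for Corollary~\ref{c:res}: any edge covering one of $A\cap B$, $A\cup B$, $A\sem B$, $B\sem A$ covers $A$ or $B$, and $I$ restricted to $\de_I(A)\cup\de_I(B)=\{e,f\}$ is the only relevant edge set. A short edge-counting argument (the familiar $d_I(A\cap B)+d_I(A\cup B)\le d_I(A)+d_I(B)$ and $d_I(A\sem B)+d_I(B\sem A)\le d_I(A)+d_I(B)$, both with equality iff no edge of $I$ goes between $A\sem B$ and $B\sem A$) shows that each of the four derived sets is covered by at most one edge of $\{e,f\}$, and that the two pairs $\{A\cap B, A\cup B\}$ and $\{A\sem B, B\sem A\}$ each have covering multiset contained in $\{e,f\}$.

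Now I would invoke semi-uncrossability: either $A\sem B, B\sem A\in\FF$, or $A\cap B\in\FF$ together with one of $A\cup B, A\sem B, B\sem A$. In the first case, $A\sem B$ and $B\sem A$ are members of $\FF$, each covered by $I$, and by the edge count each is covered by exactly one of $e,f$, and these must be the two different edges (since $e$ covers $A$ but, because $e\notin\de_I(B\sem A)$ would force... — here one checks $e\in\de_I(A\sem B)$ and $f\in\de_I(B\sem A)$, or vice versa). Replacing $A,B$ in $\LL$ by $A\sem B, B\sem A$ keeps a valid witness family (each of $e,f$ still has a witness) and strictly decreases $\Phi$, a contradiction. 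In the second case, $A\cap B\in\FF$; since $A\cap B\subs A$ and $A\cap B\subs B$, the unique edge $e$ covering $A$ and the unique edge $f$ covering $B$ are the only candidates, so $d_I(A\cap B)\le 2$; minimality of $I$ forces $d_I(A\cap B)=1$ (if it were $2$, one of $e,f$ could be removed from $I$ while still covering $A\cap B$ via the other — more carefully, one uses that $A\cap B$ being covered twice, together with the tightness of the edge-count inequalities, pins down which edge covers the companion set). The companion set among $A\cup B, A\sem B, B\sem A$ is then covered by the remaining edge, and we again swap $\{A,B\}$ for $\{A\cap B,\, \text{companion}\}$, strictly decreasing $\Phi$ while preserving the witness property — contradiction.

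The main obstacle I expect is the bookkeeping of \emph{which} edge of $\{e,f\}$ witnesses \emph{which} replacement set, i.e. ensuring that after the swap we really still have a witness set for \emph{both} $e$ and $f$ (not the same set serving both, and not losing coverage of one of them entirely). This is exactly where the asymmetry between the two branches of semi-uncrossability matters: in the symmetric-looking branch $\{A\sem B, B\sem A\}$ the two sets are covered by $e$ and $f$ respectively (using that no $I$-edge joins $A\sem B$ to $B\sem A$, which follows from the equality case of the edge-count inequality forced by $d_I(A\sem B)=d_I(B\sem A)=1$), whereas in the branch $\{A\cap B,\text{companion}\}$ one must argue that $A\cap B$ is covered by exactly one edge, say $e$, and the companion set — whichever of $A\cup B, A\sem B, B\sem A$ is in $\FF$ — is covered by $f$. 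Handling $A\cup B$ as the companion is slightly different from $A\sem B$ or $B\sem A$ as the companion, so there is a small case split there; but in each case a direct edge-parity computation identifies the witnessing edge. I would also need to confirm that the swap does not introduce a \emph{repeated} set into $\LL$ (which would violate $|\LL|=|I|$), handled by the minimality of $\Phi$ — if the replacement set already lies in $\LL$ as some other witness, one is back to an even smaller configuration and can rerun the argument, or one observes this simply cannot happen at a $\Phi$-minimizer. Once no crossing pair survives, $\LL$ is laminar, proving the lemma.
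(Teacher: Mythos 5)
Your overall strategy---fix a witness family extremal for a cardinality-based potential and uncross a non-laminar pair $A=S_e$, $B=S_f$ using semi-uncrossability, after observing that only $e,f$ can cover the four derived sets---is the same as the paper's. However, two concrete steps in your case analysis fail. First, the claim that ``minimality of $I$ forces $d_I(A\cap B)=1$'' is false: minimality of $I$ guarantees each edge a witness set, not that members of $\FF$ are covered exactly once. When $e$ runs from $A\cap B$ to $B\sem A$ and $f$ runs from $A\cap B$ to $A\sem B$ (both edges inside $A\cup B$), we have $d_I(A\cap B)=2$, so $A\cap B$ cannot be a witness for either edge and your swap $\{A,B\}\to\{A\cap B,\mbox{companion}\}$ is simply unavailable; here one must keep one of $A,B$ and swap only the other for whichever of $A\sem B$ (which witnesses $f$) or $B\sem A$ (which witnesses $e$) lies in $\FF$. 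Second, your potential $\Phi(\LL)=\sum_{S\in\LL}|S|^2$ moves the wrong way when the companion is $A\cup B$: with $p=|A\cap B|$ and $q=|A\cup B|$ we have $p+q=|A|+|B|$ and $p<\min(|A|,|B|)$, so by convexity $p^2+q^2>|A|^2+|B|^2$; the swap $\{A,B\}\to\{A\cap B,A\cup B\}$ \emph{strictly increases} $\Phi$ (and leaves $\sum|S|$ unchanged), while your other swap $\{A\sem B,B\sem A\}$ strictly \emph{decreases} it, so no monotone-potential termination argument survives. This case genuinely occurs: take $e$ from $A\cap B$ to $B\sem A$ and $f$ from $B\sem A$ to $V\sem(A\cup B)$; then $A\sem B$ is uncovered (hence not in $\FF$) and $B\sem A$ is doubly covered, so semi-uncrossability can force the companion to be $A\cup B$.

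The paper avoids both pitfalls by proving a sharper claim: one of $A\cap B$, $A\sem B$, $B\sem A$ (never $A\cup B$) is already a witness set for one of $e,f$, and the repair replaces only \emph{one} member of $\LL$, keeping the other of $A,B$ intact; the potential is the plain sum $\sum_{S\in\LL}|S|$, and each candidate replacement is a proper subset of $A$ or of $B$. If you pursue your route, you should drop $A\cup B$ from the replacement pool, replace one set at a time, and then carefully track, in each of the four possible placements of $e$ and $f$ relative to $A\cap B$, $A\sem B$, $B\sem A$, $V\sem(A\cup B)$, which edge the swapped-in set witnesses and whether the potential actually drops --- this bookkeeping is exactly the obstacle you anticipated, and it is where the real work of the lemma lies.
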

\begin{proof}
Let $A,B \in \FF$ be witness sets of edges $e,f \in I$, respectively. 
Note that no edge in $I \sem \{e,f\}$ covers a set from $A \cap B,A \cup B,A \sem B,B \sem A$,
as such an edge covers one of $A,B$, contradicting that $A,B$ are witness sets.   
Thus all possible locations of such $e,f$ are as depicted in Fig.~\ref{f:uncross}.
We claim that one of the sets $A \cap B, A \sem B, B \sem A$ is a witness set for one of $e,f$. 
This follows from the following observations.
\begin{itemize}
\item
$A \sem B \notin \FF$ in (a) and $B \sem A \notin \FF$ in (b); in both cases, 
$A \cap B \in \FF$ is a witness set for one of $e,f$.
\item
$A \cup B \notin \FF$ in (c), and $A \cap B \notin \FF$ in (d); in both cases, 
$A \sem B,B \sem A \in \FF$ and each of $A \sem B,B \sem A$ is a witness set for one of $e,f$.
\end{itemize}

\ignore{---------------------
\begin{enumerate}[(a)]
\item
In case (a) 
$A \sem B \notin \FF$; hence $A \cap B \in \FF$ is a witness set for $e$.
\item
In case (b) 
$B \sem A \notin \FF$; hence $A \cap B \in \FF$ is a witness set for $f$.
\item
In case (c) ($e \subs B$ and $f \subs A$),  $A \cup B \notin \FF$; 
hence $A \sem B \in \FF$ is a witness set for $f$ or $B \sem A \in \FF$ is a witness set for $e$.
\item
In case (d) ($e \subs V \sem B$ and $f \subs V \sem A$),  $A \cap B \notin \FF$; 
hence $A \sem B \in \FF$ is a witness set for $e$ and $B \sem A \in \FF$ is a witness set for $f$.
\end{enumerate}
--------------------}

By the minimality of $I$ there exists a witness family for $I$. 
Let $\LL$ be a witness family for $I$ with $\sum_{S \in \LL}|S|$ minimal. 
We claim that $\LL$ is laminar. 
Suppose to the contrary that there are $A,B \in \LL$ that are not laminar.
Then there is $A' \in \{A \cap B,A \sem B,B \sem A\}$ and $B' \in \{A,B\}$ such that
$\LL'=(\LL \sem \{A,B\})\cup \{A',B'\}$ is also a witness family for $I$.
However, $\sum_{S \in \LL'} |S|<\sum_{S \in \LL}|S|$, contradicting the choice of $\LL$.
\end{proof}

\begin{figure} \centering \includegraphics{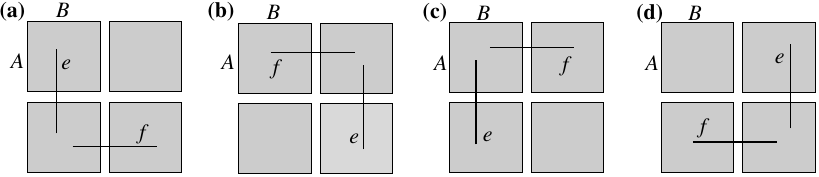}
\caption{Illustration to the proof of Lemma~\ref{l:witness}.}
\label{f:uncross} \end{figure}

We note that Lemma~\ref{l:witness} extends to pliable families 
by a slightly more complicated proof, see \cite[Lemma~10]{BCGI}. 

Let $J=\bigcup_{C \in \CC} \de_I(C)$. Note that $\de_I(C)=\de_J(C)$ for any $C \in \CC$.
As any subfamily of a laminar family is also laminar, we conclude from Lemma~\ref{l:witness}
that if $\FF$ is semi-uncrossable then there exists a laminar witness family $\LL \subs \FF$ for $J$. 
Note that for every $S \in \LL$ there is $C \in \CC$ such that $C \subseteq S$.
Moreover, if $\FF$ is semi-uncrossable then by Lemma~\ref{l:CX} $\LL \cup \CC$ is laminar 
(this is not true if $\FF$ is only pliable).
Thus to finish the proof of Lemma~\ref{l:local-ratio}, it is sufficient to prove the following lemma,
that was essentially proved in \cite{GW,WGMV}. 

\begin{lemma} \label{l:2C}
Let $\LL,\CC$ be set families such that 
$\LL \cup \CC$ is laminar, 
the members of $\CC$ are pairwise disjoint, and $\CC$ is the family of inclusion minimal members of $\LL \cup \CC$. 
Let $J$ be an edge set such that every edge in $J$ covers some $C \in \CC$. 
If $\LL$ is a witness family for $J$ then 
\begin{equation} \label{e:2C}
\sum_{C \in \CC} d_J(C) \leq 2 |\CC|-1 \ .
\end{equation}
\end{lemma}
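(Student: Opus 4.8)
The plan is to prove Lemma~\ref{l:2C} by the laminar double‑counting argument of \cite{GW,WGMV}, organised as an induction on $|\LL \cup \CC|$. As a preliminary setup: since $\LL \cup \CC$ is laminar and $\CC$ is its family of inclusion‑minimal members, I would view $\LL \cup \CC$ as a forest whose nodes are its sets, the parent of a set being the smallest set strictly containing it, and add an artificial root above the maximal members; the leaves of the resulting tree are then exactly the members of $\CC$, and every $S \in \LL$ has a core among its descendants. If $S$ is a witness set of an edge $e$ then $\de_J(S)=\{e\}$ by definition, so distinct edges have distinct witness sets; as $|\LL|=|J|$ the assignment $e \mapsto S_e$ is a bijection $J \to \LL$, hence $d_J(S)=1$ for every $S \in \LL$ — in particular any core lying in $\LL$ has $d_J$‑degree exactly $1$. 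Finally, no edge of $J$ has both endpoints inside a single core (such an edge would cover no core, the cores being pairwise disjoint), so every edge of $J$ crosses at most two cores, at most one per endpoint.

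If $\LL \subs \CC$ — every witness set is a core, which in particular covers the base case $\LL=\empt$ — I would argue directly. Contract every core to a single point and collect all remaining vertices of $V$ into one extra point; this turns $J$ into a multigraph in which $d_J(C)$ is the degree of the point of $C$, so $\sum_{C \in \CC} d_J(C)=2e_1+e_0$, where $e_1$ counts edges between two core‑points and $e_0$ counts edges between a core‑point and the extra point, and $e_1+e_0=|J|=|\LL|$. An edge between cores $C$ and $C'$ must be witnessed by $C$ or by $C'$, since otherwise it would cross three cores; hence $e_1 \le |\LL|$ and $\sum_{C \in \CC} d_J(C)=|\LL|+e_1 \le 2|\LL|$. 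If $\LL=\CC$ this is $|\CC| \le 2|\CC|-1$, and if $\LL$ is a proper subfamily of $\CC$ then $|\LL| \le |\CC|-1$ gives $\sum_{C \in \CC} d_J(C) \le 2|\CC|-2$; either way $(\ref{e:2C})$ holds.

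Otherwise $\LL$ contains a set $S$ that is not a core; take such an $S$ minimal. Then the children $C_1,\dots,C_t$ of $S$ in the forest are all cores, $t \ge 1$, and the witness sets strictly inside $S$ are exactly those $C_i$ that happen to lie in $\LL$ — say there are $a$ of them. Let $e_S$ be the unique edge with $\de_J(S)=\{e_S\}$, and let $J_S$ be the set of edges of $J$ with both endpoints in $S$. I would \emph{contract the subtree at $S$}: replace $C_1,\dots,C_t$ in $\CC$ by the single new core $S$, delete the sets $\{C_i : C_i \in \LL\}$ from $\LL$, and delete $J_S$ from $J$. The crucial bookkeeping is that $|J_S|=a$: every edge of $J_S$ is witnessed by a set strictly inside $S$ (laminarity rules out all other possibilities) and, conversely, every witness set strictly inside $S$ witnesses an edge of $J_S$ rather than $e_S$; hence the surviving $\LL$ is still a witness family for the surviving $J$. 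Laminarity, pairwise disjointness of the cores, and the property that the core family equals the family of inclusion‑minimal sets are all preserved, because the operation simply prunes the forest below $S$ and turns $S$ into a leaf, and $|\LL \cup \CC|$ drops by $t \ge 1$. Since $d_J(S)=1$ in the contracted instance and contraction decreases $\sum_{C} d_J(C)$ by exactly $\sum_{i=1}^{t} d_J(C_i)-1$, the induction closes provided $\sum_{i=1}^{t} d_J(C_i) \le 2t-1$; and this holds because the only edge of $J$ with an endpoint in $S$ and an endpoint outside $S$ is $e_S$, so every other edge contributing to the sum lies in $J_S$ and contributes at most $2$, whence $\sum_{i} d_J(C_i) \le 2|J_S|+1=2a+1$ — which is at most $2t-1$ when $a<t$, while if $a=t$ then every $C_i$ lies in $\LL$ and $\sum_{i} d_J(C_i)=t \le 2t-1$.

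I expect the main obstacle to be this bookkeeping rather than any genuinely new idea: verifying that each hypothesis of the lemma survives the contraction — in particular that the core family remains equal to the family of inclusion‑minimal sets and that the surviving witness family has the correct cardinality — together with the small case split $a<t$ versus $a=t$ in the local bound. Since the argument is essentially that of \cite{GW,WGMV}, one may alternatively cite it.
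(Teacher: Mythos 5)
Your proof is correct, and I checked the delicate points: the bijectivity of the witness assignment (giving $d_J(S)=1$ for all $S\in\LL$), the identity $|J_S|=a$ (which needs exactly the injectivity of the witness map to rule out a core $C_i\subsetneq S$ witnessing $e_S$, since $S$ already witnesses it), the preservation of all hypotheses under your contraction, and the local bound $\sum_i d_J(C_i)\le 2t-1$ via the case split $a<t$ versus $a=t$. The route is recognizably in the spirit of \cite{GW,WGMV} but organized differently from the paper's proof. The paper inducts on $|\CC|$ and peels off a single core $C$ sitting inside an inclusion-minimal witness set $S_e$, with the key structural facts being $\de_J(C)=\{e\}$ and $|\LL_C|\le 2$ (at most two witness sets contain $C$ and no other core), so that each step decreases the sum by at most $2$. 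You instead induct on $|\LL\cup\CC|$, take a \emph{minimal non-core} member $S$ of $\LL$, and contract all $t$ of its children-cores at once, with the local bound $\sum_i d_J(C_i)\le 2|J_S|+1$; your base case ($\LL\subs\CC$) is handled by a direct degree count rather than by $|\CC|=1$. Your version buys a cleaner accounting of where the $-1$ slack comes from (the edge $e_S$ leaving $S$ is counted once, not twice), at the cost of somewhat more bookkeeping in verifying that the contracted instance still satisfies all hypotheses; the paper's one-core-at-a-time step is shorter to state but hides the case analysis in the definition of $J'$ and $\LL'$. One cosmetic remark: both your argument and the paper's implicitly assume $\CC\neq\empt$ (the inequality $0\le 2|\CC|-1$ fails for $\CC=\empt$), which is guaranteed in the application; you may want to state this explicitly in your base case $\LL=\CC$, where you use $|\CC|\le 2|\CC|-1$.
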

\begin{proof}
Let $S=S_e$ be an inclusion minimal set in $\LL$, let $C \in \CC$ such that $C \subs S_e$ (possibly $C=S_e$),
and let $\LL_C$ be the family of sets in $\LL$ that contain $C$ and contain no other set in $\CC$. We claim that then:
\begin{enumerate}[(i)]
\item
$\de_J(C)=\{e\}$.
\item
$S_e \in \LL_C$ and $|\LL_C| \leq 2$.
\end{enumerate}

We prove (i). Suppose to the contrary that there is $f \in \de_J(C) \sem \{e\}$. 
Since $\LL$ is laminar and $S_e$ is inclusion minimal in $\LL$, we must have 
$S_e \subset S_f$ or $S_e \cap S_f=\empt$. In both cases $f$ covers $S_e$,
contradicting that $S_e$ is a witness set for $e$. 

We prove (ii). It is easy to see that $S_e \in \LL_C$. 
Suppose to the contrary that there are distinct $S_f,S_g \in \LL_C \sem \{S_e\}$.
Then $S_e \subset S_f \subset S_g$. 
Since every edge in $J$ covers some $C \in \CC$, there is $C_f \in \CC$ that is covered by $f$. 
Since $S_g \in \LL_C$, we must have $C_f \cap S_g=\empt$.
But then $f$ must cover $S_g$, contradicting that $S_g$ is a witness set for $g$. 

The rest of the proof is by induction on $|\CC|$. 
The base case $|\CC|=1$ is easy to verify. 
We prove the lemma for $|\CC| \geq 2$, assuming that it holds for $|\CC|-1$. 

Let $C$ be as in (i) and let $\CC'=\CC \sem \{C\}$. Define $J'$ and $\LL'$ as follows.
\begin{itemize}
\item
If $\LL_C=\{S_e\}$ and if there is $C' \in \CC'$ such that $\de_J(C')=\{e\}$,
then $J'=J$ and $\LL'=(L \sem \{S_e\}) \cup \{C'\}$.
\item
Else, $J'=J \sem \{e\}$ and $\LL'=\LL\sem \{S_e\}$.
\end{itemize}
One can verify that the triple $\LL',\CC',J'$ satisfies the assumptions of the lemma and that
$\sum_{C \in \CC}d_J(C)-\sum_{C \in \CC'}d_{J'}(C) \leq 2$.
From this, using the induction hypothesis we get
$$
\sum_{C \in \CC}d_J(C) \leq \sum_{C \in \CC'}d_{J'}(C)+2 \leq (2|\CC'|-1)-1+2=2|\CC|-1 \ ,
$$
concluding the proof.
\end{proof}

This concludes the proof of Theorem~\ref{t:main}. 


\section{Concluding remarks} \label{s:concl}

In this paper we presented a new class of semi-uncrossable set families 
that extends the class of uncrossable families. 
The algorithm of \cite{WGMV} for uncrossable families 
extends to the more general class of semi-uncrossable families with the same approximation ratio of $2$.
Our work shows that often a combination of two problems such that at least one of them is non-symmetric gives 
a problem with a semi-uncrossable family that is not uncrossable in general. 

Jain \cite{J} introduced the iterative rounding method, 
and used it to obtain approximation ratio $2$ for the problem of covering a 
skew-supermodular (a.k.a. weakly supermodular) set function;
the class of  skew-supermodular set functions generalizes the class of uncrossable set families. 
It is an interesting question whether Jain's result can be extended to a larger class of set functions.


\end{document}